\newcommand{\Q}{\mathbb Q}
\newcommand{\N}{\mathbb N}
\newcommand{\Z}{\mathbb Z}
\newcommand{\C}{\mathbb C}
\newcommand{\F}{\mathbb F}
\newcommand{\p}{\mathfrak p}
\renewcommand{\O}{\mathcal{O}}
\renewcommand{\p}{\mathfrak{p}}
\newcommand{\q}{\mathfrak{q}}
\renewcommand{\epsilon}{\varepsilon}
\newtheorem{thm}{Theorem}[section]
\newtheorem{prop}[thm]{Proposition}
\newtheorem{algo}[thm]{Algorithm}
\theoremstyle{remark}
\newtheorem{remark}[thm]{Remark}
\newtheorem{example}[thm]{Example}
\theoremstyle{definition}
\title{The Rabin cryptosystem over number fields}
\author{Alessandro Cobbe}
\address{
	Universit\"at der Bundeswehr M\"unchen\\
	Fakult\"at f\"ur Informatik\\
	Werner-Heisenberg Weg 39\\
	85579 Neubiberg\\
	Germany}
\email{alessandro.cobbe@unibw.de}
\author{Andreas Nickel} 
\address{
	Universit\"at der Bundeswehr M\"unchen\\
	Fakult\"at f\"ur Informatik\\
	Werner-Heisenberg Weg 39\\
	85579 Neubiberg\\
	Germany}
\email{andreas.nickel@unibw.de}
\urladdr{https://www.unibw.de/timor/mitarbeiter/univ-prof-dr-andreas-nickel}
\author{Akay Schuster}
\address{
	Universit\"at der Bundeswehr M\"unchen\\
	Fakult\"at f\"ur Informatik\\
	Werner-Heisenberg Weg 39\\
	85579 Neubiberg\\
	Germany}
\email{akay.schuster@unibw.de}
\date{Version of 3rd June 2025}
\subjclass[2020]{94A60, 11T71, 11Y40}
\begin{document}
\maketitle
\begin{abstract}
We extend Rabin's cryptosystem to general number fields. We show that decryption
of a random plaintext is as hard as the integer factorisation problem, provided the modulus
in our scheme has been chosen carefully. We investigate the performance of our new cryptosystem in comparison
with the classical Rabin scheme and a more recent version over the Gaussian integers.
\end{abstract}

\section*{Introduction}
Rabin's cryptosystem \cite{Rabin} was the first asymmetric public-key cryptosystem for which it
was shown that any algorithm which finds one of the possible plaintexts for every Rabin-encrypted ciphertext can be used to factor the modulus. It is performed in the quotient ring $\Z/ N\Z$ where $N = pq$ is the product of two large
distinct prime numbers. The public key is $N$ and the private key is the pair $(p,q)$.
Encryption is very simple: To encrypt a message $m \in \Z/ N\Z$ one just computes
its square; so $c = m^2$ is the ciphertext. For decryption, one uses the knowledge of $p$ and $q$
to compute the square roots of $c \pmod p$ and $c \pmod q$. For this one can use the 
algorithm of Tonelli-Shanks (see \cite[\S 1.5.1]{Cohen}, for instance), but there is an easier method for primes
which are congruent to $3$ modulo $4$: If $y$ is a quadratic residue modulo such a $p$, then
$\pm y^{(p+1)/4}$ are the roots of $y$. Finally, one uses the Chinese Remainder
Theorem to obtain the (in general four) square roots of $c$ modulo $N$. It remains to decide which of these
roots is the original message $m$.

In this paper, we generalise Rabin's scheme to rings of integers in number fields. So let us fix a number field
$K$ with ring of integers $\mathcal{O}_K$ and choose two distinct non-zero prime ideals $\p$ and $\q$ in $\mathcal{O}_K$.
Our scheme is performed in the quotient ring $\mathcal{O}_K / \mathfrak n$ where $\mathfrak{n} = \p \q$.
The public key is the modulus $\mathfrak{n}$ and the private key is the pair $(\p,\q)$.
The ciphertext of a message $m \in \mathcal{O}_K/\mathfrak{n}$ is again $c = m^2$. For decryption one has to compute
the square roots of $c$ modulo $\p$ and  modulo $\q$. For this, we provide a generalisation of the Tonelli-Shanks algorithm.
Finally, one uses the Chinese Remainder Theorem to obtain the (again up to four) square roots of $c$ modulo 
$\mathfrak{n}$.

The special case of Gaussian integers
has been treated by Awad, El-Kassar and Kadri \cite{RabinGaussianIntegers}. They use the (extended) Euclidean algorithm
in $\Z[i]$ to make the isomorphism in the Chinese Remainder Theorem explicit. This does not generalise to
arbitrary number fields. We provide a version that only relies on the Euclidean algorithm in $\Z$.

In order to speed up decryption, we also provide an easy and fast way to compute square roots modulo
primes $\p$ in $\mathcal{O}_K$ under certain restrictions on $\p$, thereby generalising the above method
for primes $p \equiv 3 \pmod 4$. This approach works particularly well in number fields of degree $3$,
but does not work in any quadratic number field. This is one reason why we need to allow more freedom in choosing
the field $K$. Indeed, our runtime analysis shows that decryption of ciphertexts of comparable size
becomes a lot faster over carefully chosen cubic than over quadratic fields.

It turns out that it is a subtle question how to represent the public key $\mathfrak{n}$. There are two reasonable
ways to do this: either as a list of generators or in Hermite normal form. We show that for most choices of
$\p$ and $\q$ this is not secure. It indeed allows us to factor $\mathfrak{n}$. Our considerations led us
to the conclusion that the best choices are prime ideals of the form $\p= (p)$, where $p \in \Z$ is prime,
i.e.\ we consider rational primes $p$ which are inert in $K$. So we also need conditions that guarantee us that
a given rational prime remains prime in $K$. We will provide a class of number fields, where this can be
checked by a simple congruence condition. So the public key in our scheme is just $N = pq$
where $p,q \in \Z$ are primes that are inert in $K$. Then, as in Rabin's original scheme, an algorithm
which finds one of the possible plaintexts for each ciphertext allows $N$ to be factored.

We also address the question how to find the original message $m$ among the usually four square roots
of the ciphertext $c=m^2$. In the classical case, Williams \cite{Williams} proposed a method
whenever both primes $p$ and $q$ are congruent to $3$ modulo $4$. It suffices to add two extra bits
to the ciphertext: the parity and the Legendre symbol $\left(\frac{m}{N}\right)$ of the message $m$.
If we assume in addition that $p$ and $q$ are inert in $K$, we will show that it is still possible 
to uniquely identify the original message by adding two extra bits.

Finally, we note that Petukhova and Tronin \cite{RSA-Dedekind} have generalised the RSA scheme
to general Dedekind domains with finite residue fields. However, they have not addressed the following
questions: (i) how to find such Dedekind domains, (ii) how to make the required computations explicit,
(iii) when is the scheme secure? Though we focus on Rabin's cryptosystem in this paper, many of our
considerations also apply to this generalisation of the RSA scheme. (i) Rings of integers in number fields
are a natural source of Dedekind domains with finite residue fields. (ii) Our remarks on computations
in residue rings of the form $\mathcal{O}_K / \mathfrak{n}$ also apply to the generalised RSA scheme.
(iii) The public key in the generalised RSA scheme is a pair $(\mathfrak{n},e)$, where $e$ is
coprime with the cardinality of $\left(\mathcal{O}_K/\mathfrak{n}\right)^{\times}$.
Since we show that the knowledge of $\mathfrak{n}$ (if given as a list of generators or in Hermite normal form)
often suffices to factor it, one needs to impose the same conditions on $\p$ and $\q$
also in the generalised RSA scheme.

\section{The public key}
\subsection{Number fields and rings of integers}
Let us first explain how we represent number fields.
Let $K$ be a number field of degree $d$ over $\Q$.
Let $\theta \in K$ be a primitive element so that $K = \Q(\theta)$, which means that
$1 = \theta^0, \theta, \theta^2, \dots, \theta^{d-1}$ constitute a $\Q$\nobreakdash-basis of $K$.
The minimal polynomial of $\theta$ is the unique monic polynomial 
${g_{\theta} \in \Q[x]}$ of lowest degree with root $\theta$. It is irreducible of degree $d$ and there is 
a field isomorphism
\begin{equation} \label{eqn:K-iso}
	\Q[x] / (g_{\theta}) \xrightarrow{\, \simeq \, } K = \Q(\theta), \,
	x \mapsto \theta.
\end{equation}
So if we `choose' a number field, we really pick a monic irreducible polynomial ${g \in \Q[x]}$
and consider the field $\Q[x]/(g)$. If we want to view this field as a subfield of the complex numbers,
we choose a root $\theta \in \C$ of $g$ and apply \eqref{eqn:K-iso}. Then one has $g_{\theta} = g$
and the set of embeddings of $\Q[x]/(g)$ into $\C$ is in one-to-one correspondence with the roots of $g$.

Let $\O_K$ be the ring of integers in $K$. Then one can always choose $\theta \in \O_K$ so that
$g_{\theta}$ has integral coefficients. The isomorphism \eqref{eqn:K-iso} then induces an isomorphism of rings
\[
	\Z[x] / (g_{\theta}) \simeq \Z[\theta].
\]
The ring $\Z[\theta]$ is always contained in $\O_K$, but we do not have equality in general.
This will be crucial in the following, as we get the ring $\Z[\theta]$ for free once we have
$g_{\theta}$, but it might be rather expensive to compute $\O_K$ in general.
For instance, Zassenhaus's Round 2 Algorithm \cite[\S 6.1.4]{Cohen} first factors the discriminant
of $g_{\theta}$, which is a hard problem when the discriminant is large.
Luckily, it will not be necessary to determine $\O_K$ for our purposes (this will be explained below). 
Alternatively, one can stick to (classes of) number fields, where the ring of integers has been determined.

\begin{example}[Quadratic number fields]
	Let $\delta \in \Z$ be square-free, $\delta \not=0,1$. Then $x^2-\delta \in \Z[x]$ is irreducible and
	$\Q[x]/(x^2-\delta) \simeq \Q(\sqrt{\delta})$.
	The ring of integers in $\Q(\sqrt{\delta})$ is $\Z[\omega]$, where $\omega = \sqrt{\delta}$ if 
	$\delta \equiv 2$ or $3 \pmod 4$, whereas $\omega = (1+\sqrt{\delta})/2$ if ${\delta \equiv 1 \pmod 4}$ 
	(see \cite[Proposition 5.1.1]{Cohen}, for example). So in the latter case the
	index of $\Z[\sqrt{\delta}]$ in $\O_{\Q(\sqrt{\delta})}$ is $2$.
\end{example}

\begin{example}[Cyclotomic fields]
	Let $m$ be a positive integer such that $m \not\equiv 2 \pmod 4$. Let $\zeta_m \in \C$ be a primitive
	$m$-th root of unity. Then $\Q(\zeta_m)$ has degree $\varphi(m)$ over $\Q$, where $\varphi$ denotes
	Euler's totient function. Here one always has $\O_{\Q(\zeta_m)} = \Z[\zeta_m]$ by \cite[Proposition 9.1.2]{Cohen}.
\end{example}

Every $\alpha \in K$ can uniquely be written as $\alpha = \sum_{i=0}^{d-1} \alpha_i \theta^i$ with rational coefficients
$\alpha_0, \dots, \alpha_{d-1}$. If we use this representation for elements in $K$, addition becomes very easy.
For multiplication one essentially has to compute once the representations for $\theta^{d+k}$ with $0 \leq k < d-1$.

We denote the norm of $\alpha \in K$ by $N(\alpha)$.

\begin{example}
	In the case of quadratic number fields, computing the norm is very simple.
	Suppose that $\alpha = a + b \sqrt{\delta} \in \Q(\sqrt{\delta})$, where $\delta\not=0,1$ is square-free
	and $a,b \in \Q$. Then it is easy to see that $N(\alpha) = a^2 - \delta b^2$.
\end{example}

\subsection{Prime ideals} \label{subsec:prime-ideals}
For our scheme we need to work in residue fields of the form
$\O_K/\p$, where $\p$ is a non-zero prime ideal of $\O_K$. However, we want to avoid computing $\O_K$
and to work with $\Z[\theta]$ instead. 

Let $p$ be the rational prime below $\p$ and $\p_{\theta} := \p \cap \Z[\theta]$ the prime ideal
of $\Z[\theta]$ below $\p$. The index of $\Z[\theta]$ in $\O_K$ is finite.
Whenever it is not divisible by $p$, the natural inclusion
\[
	\Z[\theta]/\p_{\theta} \rightarrow \O_K/\p 
\]
is an isomorphism. As $p$ will be chosen to be a large prime, the probability that $p$ divides the unknown
index $[\O_K: \Z[\theta]]$ is negligibly small. However, there are even methods to guarantee this.

Let us denote the discriminant of a polynomial $g$ by $\Delta(g)$ and the discriminant of $K$ by $\Delta_K$.
If $K = \Q(\theta)$, then we have
\[
	\Delta(g_{\theta}) = \Delta_K \cdot [\O_K:\Z[\theta]]^2.
\]
So it is sufficient to choose a prime $p$ that does not divide $\Delta(g_{\theta})$ and
the latter is computable (see \cite[\S 3.3]{Cohen}). However,  we can also use 
Mahler's bound \cite[Corollary on p.\ 261]{Mahler}
\[
	\Delta(g_{\theta}) < C_{\theta} := d^d L(g_{\theta})^{2d-2},
\]
where we recall that $d = \deg(g_{\theta})$ and $L(g_{\theta}) := \sum_{i=0}^d |a_i|$ if $g = \sum_{i=0}^d a_i x^i$.
This bound is very easy to compute and choosing $p$ larger than $C_{\theta}$ will be sufficient for our purposes.

In order to choose a prime ideal $\p$ we first choose a rational prime $p > C_{\theta}$. Then the polynomial
$g_{\theta} \pmod p \in \F_p[x]$ is square-free and we let
\[
	g_{\theta} \equiv \prod_{j=1}^r g_j(x) \pmod p
\]
be its decomposition into irreducible factors in $\F_p[x]$. Here $g_j(x) \in \Z[x]$ are taken to be monic.
Note that there are efficient algorithms for the factorisation of polynomials modulo $p$, which indeed simplify
if the polynomial is square-free \cite[\S 3.4]{Cohen}. Then by \cite[Theorem 4.8.13]{Cohen} choosing a
prime ideal above $p$ is the same as choosing a factor $g_j$. Indeed the prime ideals are given by
$\p_j = (p, g_j(\theta))$ and the residue degree $[\O_K/\p_j : \F_p]$ is equal to the degree of $g_j$.

We now fix such a choice and change notation slightly. We simply write $\p$ for the prime ideal 
we have obtained in that way and rename the chosen $g_j$ as $g_\p$ to make the dependence on $\p$ visible.
So we have $\p = (p, g_\p(\theta))$, where $g_\p(x) \pmod p$ is an irreducible factor of $g_{\theta}(x) \pmod p$. 
Since
\[
	\Z[\theta] / (p, g_\p(\theta)) \simeq \O_K / (p, g_\p(\theta))
\]
by our choice of $p$, we abuse notation and write $\p$ for the ideal generated by $p$ and $g_\p(\theta)$
in both $\O_K$ and $\Z[\theta]$. In the same way we choose a second large prime $q\not=p$ and a prime ideal
$\q = (q, g_\q(\theta))$ above $q$. We write $f_\p$ and $f_\q$ for the residue degree $[\O_K/\p : \F_p]$ and
$[\O_K/\q : \F_q]$, respectively.

\subsection{The public key} The public key consists of the monic irreducible polynomial $g_{\theta} \in \Z[x]$ and the ideal
$\mathfrak{n} = \p \q$. Alternatively, we may consider the polynomial $g_{\theta}$ and thus the
field $K$ as fixed. For the security of a generalised version of the Rabin cryptosystem it is crucial that it is not feasible to compute the decomposition into prime ideals of $\mathfrak{n}=\p\q$, i.e. computing $\p$ and $\q$ assuming we know $\mathfrak{n}$.

First of all we observe that this is at most as hard as factoring the number $N=pq$, where $p$ and $q$ are the prime numbers below $\p$ and $\q$, under the assumption that we represent $\mathfrak{n}$ in a way which allows to compute $\mathfrak{n}\cap\Z=(N)$. Indeed from the factorisation of $N$, we easily compute all the primes of $\mathcal O_K$ dividing $p$ and $q$ and we just need to pick the correct ones. To decide which of the factors actually divide $\mathfrak{n}$ we can use the algorithm described in \cite[\S 4.8.3]{Cohen} to compute the valuation of $\mathfrak{n}$ with respect to the different prime ideals.

Before continuing, we need to clarify how we present $\mathfrak{n}$. Recall that $p$ is called inert in $K$ if $\p = (p)$. If both $p$ and $q$ are inert, we have $\mathfrak{n} = (N)$,
where $N = pq$,
and we can publish the generator $N$. Now suppose that at least one of $p$ and $q$ is not inert.
We claim that it is not secure to publish the four generators $N := pq$, $p g_\q(\theta)$, $q g_\p(\theta)$ and
$g_\q(\theta) g_\q(\theta)$. To see this, we may and do assume that $p$ is not inert. Then the coefficients of
$g_\p(\theta)$ are not all divisible by $p$, as otherwise $\p = (p, g_\p(\theta)) = (p)$. Hence the
greatest common divisor of the coefficients of $q g_\p(\theta)$ and $N$ is $q$ and we have factored $N$.

Let us assume that $g_\p(\theta) \in \p \setminus \p^2$ and that $g_\p(\theta)$ is coprime to $q$, and the same with the roles of $p$ and $q$ reversed.
Both conditions can be achieved in polynomial time by \cite[Proposition 1.3.11]{Cohen2}. In practice,
we can replace $g_\p(\theta)$ by $g_\p(\theta)+p$ if $g_\p(\theta) \in \p^2$. Then the first condition holds.
The probability that the second condition fails is negligibly small as $q$ has been chosen independently of
$p$ and the probability of a random $x \in \O_K$ to lie in a prime ideal above $q$ is at most $d/q$. 

Under these conditions one has $\mathfrak{n} = (N, h(\theta))$ with $N = pq$ and $h = g_\p \cdot g_\q$.
The next result shows that this is still not secure unless we choose $g_\p$ and $g_\q$ of the same degree,
i.e.\ $f_\p = f_\q$.

\begin{prop}\label{easyfactor}
Let $K = \Q(\theta)$, where $\theta$ is a root of a monic irreducible polynomial $g_{\theta}\in\Z[x]$. Let $\p$ and $\q$ be two prime ideals of $\mathcal O_K$ lying over two distinct unramified primes $p,q\in\Z$. Let  us assume that the inertia degrees $f_\p$ and $f_\q$ of $\p$ and $\q$ are distinct. Suppose that  $\mathfrak{n}=\p\q$ is given by a set of generators as described above
or in Hermite normal form, and $\mathfrak{n}\cap\Z=(N)$. Then it is possible to efficiently compute $p$ and $q$.
\end{prop}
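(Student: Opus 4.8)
The plan is to exploit the fact that $\mathfrak{n} = \mathfrak{p}\mathfrak{q}$ with $f_{\mathfrak{p}} \neq f_{\mathfrak{q}}$ forces an imbalance in the residue rings $\mathcal{O}_K/\mathfrak{p}$ and $\mathcal{O}_K/\mathfrak{q}$, which one can detect and leverage to split $N$. First I would recall that $N(\mathfrak{p}) = p^{f_{\mathfrak{p}}}$ and $N(\mathfrak{q}) = q^{f_{\mathfrak{q}}}$, so the norm of the ideal $\mathfrak{n}$, which equals the index $[\mathcal{O}_K : \mathfrak{n}]$, is $p^{f_{\mathfrak{p}}} q^{f_{\mathfrak{q}}}$. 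This number is computable from any of the two presentations under consideration: from a set of generators one obtains a $\Z$-basis in Hermite normal form via \cite[Proposition 1.3.11]{Cohen2} (or it is given directly), and the index $[\mathcal{O}_K:\mathfrak{n}]$ is then the product of the diagonal entries, divided by the analogous index $[\mathcal{O}_K : \Z[\theta]]$-correction if one is working in $\Z[\theta]$; but since $p,q > C_\theta$ are chosen not to divide $[\mathcal{O}_K:\Z[\theta]]$, the relevant $p$- and $q$-parts of the two indices agree, so one recovers $M := p^{f_{\mathfrak{p}}} q^{f_{\mathfrak{q}}}$ exactly.

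Next I would combine $M$ with the additional datum $N = pq$, which we are given by hypothesis ($\mathfrak{n} \cap \Z = (N)$, and this intersection is computable from either presentation). Write $f_{\mathfrak{p}} = a$ and $f_{\mathfrak{q}} = b$ with, say, $a < b$ and both bounded by $d = \deg g_\theta$. Then $M = p^a q^b$ and $N = pq$, so $M / N^a = q^{\,b-a}$ is a positive integer and a perfect $(b-a)$-th power of the prime $q$; in particular $\gcd(M, N^{b})/\gcd(M,N^{a})$-type manipulations, or simply computing $M/N^a$ and extracting the prime $q$ as its radical, recover $q$, and then $p = N/q$. Since $a$ and $b$ are unknown a priori, one iterates over all pairs $(a,b)$ with $1 \le a < b \le d$ (there are $O(d^2)$ of them), and for each candidate checks whether $N^a \mid M$ and $M/N^a$ is a prime power; exactly the true pair passes this test and yields the factorisation. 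All of these steps — Hermite normal form, gcd, integer root extraction, primality testing — run in polynomial time in $\log N$ and $d$.

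The main obstacle, and the point that needs care, is justifying that the norm $N(\mathfrak{n})$ is genuinely extractable from the stated presentations without already knowing the factorisation of $\Delta(g_\theta)$ or of $N$: one must argue that the Hermite normal form of a $\Z$-basis of $\mathfrak{n}$ (inside $\Z[\theta]$, which is all we have access to) has determinant equal to $[\Z[\theta]:\mathfrak{n} \cap \Z[\theta]] = [\mathcal{O}_K : \mathfrak{n}] \cdot [\Z[\theta]:\mathcal{O}_K]^{-1}$-type relation, and that the $pq$-part of this determinant is precisely $p^{f_{\mathfrak{p}}}q^{f_{\mathfrak{q}}}$ because $p, q \nmid [\mathcal{O}_K:\Z[\theta]]$. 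A secondary subtlety is the degenerate possibility that the $(b-a)$-th root extraction is ambiguous because $q$ itself could a priori be a perfect power — but $q$ is prime, so $q^{b-a}$ has a unique prime divisor and the radical recovers $q$ unconditionally. Once the norm is in hand, the arithmetic combining $M$ and $N$ is elementary, so the crux of the write-up is the bookkeeping around indices in the first paragraph.
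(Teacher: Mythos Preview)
Your proposal is correct and follows essentially the same strategy as the paper: compute (the $N$-part of) the ideal norm $\mathcal N(\mathfrak n)=p^{f_\p}q^{f_\q}$ from the given presentation, then combine it with $N=pq$ to split $N$. Two points where the paper is slightly slicker are worth noting. First, instead of iterating over all $O(d^2)$ pairs $(a,b)$ and testing for prime powers, the paper simply takes the largest $a$ with $N^a\mid \mathcal N(\mathfrak n)$; then $\mathcal N(\mathfrak n)/N^a$ is divisible by exactly one of $p,q$ (since $f_\p\neq f_\q$), so a single $\gcd(\mathcal N(\mathfrak n)/N^a,\,N)$ already yields a prime factor. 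Second, for the generator presentation $\mathfrak n=(N,h(\theta))$ the paper does not pass through an HNF of $\mathfrak n$ at all: it computes $[\Z[\theta]:(h(\theta))]$ via the Smith normal form, which equals $p^{f_\p}q^{f_\q}$ times a factor coprime to $N$, and this already suffices for the $\gcd$ step. Your index bookkeeping around $[\O_K:\Z[\theta]]$ is a little tangled in the write-up, but the conclusion you draw is correct.
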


\begin{proof}
We first show that we can factor $N$ if we can compute the norm of $\mathfrak n$, namely $\mathcal N(\mathfrak n) := [\O_K:\mathfrak n]=p^{f_\p}q^{f_\q}$. Indeed if $N^a$ is the largest power of $N$ that divides $\mathcal N(\mathfrak n)$, then $\mathcal N(\mathfrak n)/N^a$ is divisible by exactly one of the prime numbers $p$ and $q$ and $\gcd(\mathcal N(\mathfrak n)/N^a,N)$ is a prime factor of $N$.

If $\mathfrak{n}$ is described by its Hermite normal form, which is a square matrix, then the norm of $\mathfrak n$ is the ideal of $\Z$ generated by the determinant of the matrix \cite[Proposition 4.7.4]{Cohen}.
So we are done in this case.

Assume now that $\mathfrak n=(N,h(\theta))$ as above. We actually need to compute $\mathcal N(\mathfrak n)$ only up to some factor which is coprime to $N$, so it is enough to consider
\[\mathcal N((h(\theta))) = \mathcal N(\p) \mathcal N(\q) \mathcal N(\mathfrak{a}) = p^{f_\p}q^{f_\q} \mathcal N(\mathfrak{a}),\]
where $\mathcal N(\mathfrak{a})$ is prime to $N$.
As the index $[\O_K: \Z[\theta]]$ is not divisible by either $p$ or $q$, it is enough to compute $[\Z[\theta]: (h(\theta))]$, which can be done for instance via computing the Smith normal form (see \cite[\S 2.4.4]{Cohen}) of
the quotient $\Z[\theta]/(h(\theta))$.
\end{proof}

By the above proposition, there is an efficient algorithm to factor $\mathfrak{n}$ if $f_\p\neq f_\q$. Since this is not desirable, we focus on the remaining case $f_\p=f_\q$.

Let us consider the special case in which $\p$ and $\q$ are inert over $\q$, i.e. $f_\p=f_\q=[K:\Q]$. Then $\p=(p)$, $\q=(q)$ and $\mathfrak{n}=(pq)$ and factoring $\mathfrak{n}$ is clearly equivalent to factoring its generator $pq$.
Moreover, restricting to the case of inert primes has some further positive side effects: the amount of information that can be sent with our protocol becomes maximal and the computations which have to be performed are easier.

In the case $f_\p=f_\q<[K:\Q]$ we can neither provide a proof that the factorisation of $\mathfrak{n}$ is as difficult as the factorisation of $N$ nor we can show that there is an efficient algorithm to factor $\mathfrak{n}$.
In the case of quadratic number fields, we can offer the following result.

\begin{prop}
	Let $N=pq$ be a product of two distinct prime numbers. Suppose that there is an algorithm with the
	following property:
	For a randomly chosen non-square $\delta$ in the range $0<\delta<N$ such that the primes $p$ and $q$ split 
	in $\Q(\sqrt{\delta})$, it computes the factorisation of a randomly chosen ideal $\mathfrak{n} = \p\q$
	with probability at least $\omega > 0$,
	where $\p$ and $\q$ are prime ideals of $\mathcal O_{\Q(\sqrt{\delta})}$ dividing $p$ and $q$, respectively.
	
	Then there is an algorithm that computes the factorisation of $N$ after $k$ steps with probability at least
	$1 - (1-\omega)^k$.
\end{prop}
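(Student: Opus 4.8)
The plan is to use the hypothesised algorithm $\mathcal{A}$ as a black box and amplify its success probability by independent repetition; the one genuinely new ingredient is a way to manufacture, \emph{without} knowing $p$ and $q$, instances $(\delta,\mathfrak{n})$ distributed exactly as in the hypothesis.

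First I would reduce to the case that $N$ is odd (otherwise $2$ is a factor) and record that from any factorisation $\mathfrak{n}=\p\q$ one recovers $p$ and $q$ at once: since $p$ and $q$ split, $\p$ and $\q$ have residue degree $1$, so $p=\mathcal{N}(\p)$ and $q=\mathcal{N}(\q)=N/p$; and in any case a purported factor of $N$ is trivially checked. So it is enough to feed $\mathcal{A}$ a correctly distributed pair $(\delta,\mathfrak{n})$ and to test what it returns.

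The key step is to choose the square root before the radicand. Pick $a$ uniformly in $\{1,\dots,N-1\}$; if $\gcd(a,N)>1$ the factorisation is already found, so assume $\gcd(a,N)=1$ and set $\delta:=a^{2}\bmod N\in(0,N)$, discarding the negligibly rare case that $\delta\in\{0,1\}$ or $\delta$ is a perfect square. Then $\delta$ is coprime to $N$ and a nonzero square modulo $p$ and modulo $q$, so the odd primes $p$ and $q$ both split in $K:=\Q(\sqrt{\delta})$, and $\sqrt{\delta}-a$ has norm $a^{2}-\delta\equiv 0\pmod N$. One checks that $\mathfrak{n}:=(N,\sqrt{\delta}-a)$ is an ideal of $\O_{K}$ with $\mathfrak{n}\cap\Z=(N)$ and $\mathcal{N}(\mathfrak{n})=N$, whose Hermite normal form and a generating set are immediate from the $\Z$-basis $\{N,\sqrt{\delta}-a\}$ of the corresponding ideal of $\Z[\sqrt{\delta}]$ (using the explicit integral basis of $K$ and that $N$ is prime to the index $[\O_{K}:\Z[\sqrt{\delta}]]\in\{1,2\}$). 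Since $(N)=\p_1\p_2\q_1\q_2$ with $\p_i\mid p$ and $\q_j\mid q$ distinct of residue degree $1$, necessarily $\mathfrak{n}=\p_i\q_j$ for one pair. Because squaring on $(\Z/N\Z)^{\times}$ is exactly $4$-to-$1$ and the four square roots of $\delta$ modulo $N$ correspond bijectively to the four ideals $\p_i\q_j$, the pair $(\delta,\mathfrak{n})$ built this way has $\delta$ uniform among the non-squares in $(0,N)$ in which $p$ and $q$ split, and, once $\delta$ is fixed, $\mathfrak{n}$ uniform among the products $\p\q$ --- exactly the distribution in the hypothesis, up to the negligible bias from discarding perfect-square values of $\delta$.

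Finally I would run $k$ such trials with independent fresh choices of $a$: each builds $(\delta,\mathfrak{n})$ as above, calls $\mathcal{A}$ on it (in whatever representation $\mathcal{A}$ expects), reads off a candidate rational prime from the returned factorisation and tests whether it divides $N$. Since \emph{every} generated instance satisfies the splitting hypothesis, each trial succeeds with probability at least $\omega$, and the trials are independent, so the probability that all $k$ fail is at most $(1-\omega)^{k}$ and the overall success probability is at least $1-(1-\omega)^{k}$. The step I expect to need the most care is the distributional claim above: one must verify that the hypothesised ``random non-square $\delta$ in which $p$ and $q$ split'' coincides, up to a negligible fraction, with the uniform distribution on quadratic residues modulo $N$ produced by the squaring trick, and that conditioning on $\delta$ yields the uniform distribution on the four ideals $\p\q$; the remaining bookkeeping (the excluded cases $\delta\in\{0,1\}$, $\delta$ a perfect square, and the behaviour at $p=2$) is routine once $N$ is assumed odd.
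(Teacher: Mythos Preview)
Your proposal is correct and follows essentially the same strategy as the paper: pick a random $a$, set $\delta\equiv a^2\pmod N$, form $\mathfrak{n}=(N,\sqrt{\delta}-a)$, and verify that the four square roots of $\delta$ modulo $N$ biject with the four ideals $\p_i\q_j$ so that the produced pair $(\delta,\mathfrak{n})$ has the required distribution. The only cosmetic differences are that the paper restricts $a$ to the interval $(\sqrt{N},\,N-\sqrt{N})$ and, when $\delta$ happens to be a perfect square $b^2$, uses $\gcd(N,b-a)$ to factor $N$ outright rather than discarding the trial as you do; it also explicitly disposes of the case $p\mid k$ or $q\mid k$ for $k=(a^2-\delta)/N$, which in your formulation is absorbed into the verification that $\mathcal N(\mathfrak n)=N$.
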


\begin{proof}
	Choose a random non-square $0<\delta<N$. If $\gcd(\delta,N) > 1$, we can factor $N$ so that we will
	exclude this case in the following. Let us write $\delta = b^2 \delta'$, where $\delta'$ is square-free.
	Then $\Q(\sqrt{\delta}) = \Q(\sqrt{\delta'})$ so that the index of $\Z[\delta]$ in $\mathcal O_{\Q(\sqrt{\delta})}$
	is either $b$ or $2b$ and thus coprime with $N$. As a consequence, $p$ and $q$ split in $\Q(\sqrt{\delta})$
	if and only if $\delta$ is a square modulo $p$ and $q$ and hence modulo $N$. So in this case we have
	$\delta \equiv a^2 \pmod N$ for some $0<a<N$. Note that actually $a> \sqrt{N}$ as otherwise $\delta = a^2$ would be
	a perfect square. This observation holds for all four square roots of $\delta \pmod N$ and in particular for
	$N-a$ so that we must have $\sqrt{N} < a < N-\sqrt{N}$.
	
	For the required algorithm, we first choose a random $\sqrt{N} < a < N-\sqrt{N}$ and take $0<\delta<N$
	to be a representative of the class of $a^2$ modulo $N$. If $\delta= b^2$ happens to be a square,
	then $b$ is a root of $\delta$, but $b \not\equiv \pm a \pmod N$ so that $\gcd(N,b-a)$ is either $p$ or $q$.
	If $\delta$ is a non-square, we consider the ideal $\mathfrak{n} =(N, a + \sqrt{\delta})$ in the quadratic number field
	$\Q(\sqrt{\delta})$. We have seen above that $p$ and $q$ split in $\Q(\sqrt{\delta})$. The norm
	of $a+ \sqrt{\delta}$ is $a^2-\delta = kN$ for some $0<k<N$, since $0 < \delta < N < a^2 < N^2$.
	Once more, if $k$ happens to be a multiple of either $p$ or $q$, we can factor $N$ by computing
	$\gcd(N,k)$. So let us assume that this is not the case. Then we must have
	$\mathfrak{n} = \p \q$ for prime ideals $\p$ above $p$ and $\q$ above $q$. If we can factor $\mathfrak{n}$, we find $\p$ and $\q$, and computing the gcd of their norms with $N$ we finally obtain $p$ and $q$.
	This event has probability at least $\omega$ if we can show that $\delta$ and $\mathfrak{n}$ have been chosen at random.
	
	For this, we observe that each non-square $\delta$ in the range $0<\delta<N$ such that the primes $p$ and $q$ split 
	in $\Q(\sqrt{\delta})$ is of the form $\delta \equiv a^2 \pmod N$ for exactly four choices of $a$ in the range
	$\sqrt{N} < a < N-\sqrt{N}$. If our choice $a$ leads to the ideal $\mathfrak{n} = (N,a+\sqrt{\delta}) = \p \q$,
	then $\p= (p, a+\sqrt{\delta})$ and $\q= (q, a+\sqrt{\delta})$. The second prime above $p$ is
	$\overline{\p} := (p, a-\sqrt{\delta}) = (p, N-a+\sqrt{\delta})$. We define $\overline{\q}$ and
	$\overline{\mathfrak{n}}$ similarly. Then the root $N-a$ of $\delta \pmod N$ leads to $\overline{\mathfrak{n}} = \overline{\p\q}$.
	The third root $a'$ with ${a' \equiv a \pmod p}$ and ${a' \equiv -a \pmod q}$ leads to the ideal
	$\mathfrak{n}' := (N, a'+\sqrt{\delta})$ and $\mathfrak{n}' = \p \overline{\q}$, since we have that $a'+\sqrt{\delta} \equiv a + \sqrt{\delta}
	\equiv 0 \pmod \p$, but $a'+\sqrt{\delta} \not\in \q$ as otherwise $q \mid (a-a')$.	
	Finally, the root $N-a'$ leads to $\overline{\p}\q$. So choosing $a$ at random is equivalent to
	choosing $\delta$ and $\mathfrak{n}$ at random.
	
	This finishes the proof as we can apply the above method $k$ times for different choices of $a$.
\end{proof}

\begin{remark}
Note that for a fixed $N$ and a fixed $\delta$, we are not able to find an ideal $\mathfrak{n}$ as required in the proposition, since this would be equivalent to compute a square root of $\delta$ modulo $N$, which is known to be as hard as factoring $N$.

This means that for a fixed $\delta$ and for prime ideals $\p$ and $\q$ of $\Q(\sqrt{\delta})$ dividing split prime numbers $p$ and $q$, we do not know whether $\mathfrak{n}=\p\q$ is difficult to factor, but it seems likely that this is the case. Note, however, that if $\mathfrak{n}$ is described by a set of generators, these generators might carry some information about the factorisation. 
\end{remark}

\begin{example}
Let us examine the case of quadratic fields more closely. Let $\delta\in\Z$ be square-free and let $\theta=\sqrt{\delta}$. An odd prime $p$ not dividing $\delta$ can be either inert or totally split, depending on whether the polynomial $g(x)=x^2-\delta$ is irreducible modulo $p$ or not. This condition can easily be checked by computing the Legendre symbol $\left(\frac{\delta}{p}\right)$.

Let us take for example $\delta=-5$. The primes which are inert are exactly the primes $p\equiv 11,13,17,19\pmod{20}$; so the first two primes which split completely are $p=3$ and $q=7$. As explained above (up to the sign before $\sqrt{-5}$), we get $\p=(3,1+\sqrt{-5})$ and $\q=(7,3+\sqrt{-5})$. Since $1+\sqrt{-5}$ is coprime to $7$ and $3+\sqrt{-5}$ is coprime to $3$, we obtain that
\[\p\q=(21,(1+\sqrt{-5})(3+\sqrt{-5}))=(21,-2+4\sqrt{-5}).\]
When, like in this example, both primes split completely the factorisation algorithm described in the Proposition \ref{easyfactor} does not apply, but still the factorisation is at most as difficult as the factorisation of $N=pq$ in $\Z$. Since the security level remains at most the same, the involved computations are not easier to perform and the message space $\mathcal O_K/\p\q$ is isomorphic to $\Z/pq\Z$ (so, in particular, of the same size), this setting offers no advantages compared to the classical Rabin-cryptosystem and so we will not pursue this further.
\end{example}

\begin{example}
The next family of number fields we want to examine are cyclotomic fields. So let $m\in\N$ such that $m \not\equiv 2 \pmod 4$ and let $\zeta_m$ be a primitive $m$-th root of unity. It is well-known that $\Z[\zeta_m]$ is the ring of integers in $\Q(\zeta_m)$, that the prime ideals which ramify in $\Z[\zeta_m]$ are those dividing $m$ and for any other prime number $p$, the inertia degree of $p$ is equal to its order modulo $m$. By Proposition \ref{easyfactor} in our choice of $p$ and $q$ we need at least to make sure that their inertia degrees coincide; actually by the considerations in the above example, we would like to take both $p$ and $q$ to be inert, which is equivalent to $p$ and $q$ having order $\varphi(m)$ modulo $m$. This can be achieved (for infinitely many primes) if and only if $m$ is a power of an odd prime or $m = 4$ (which is the case of the Gaussian integers $\Z[i]$).

Let us take for example $m=7$. Then the prime numbers which are inert in $\Z[\zeta_7]$ are those congruent to $3$ or $5$ modulo $7$, since $3$ and $5$ both have order $6$ modulo $7$.
\end{example}

\section{Square roots modulo ideals in number fields} 
Let $K$ be a number field of degree $d$ over $\Q$ with ring of integers $\O_K$ and let $\p$ and $\q$ be two distinct prime ideals of $\O_K$. Set $\mathfrak{n}=\p\q$. As before we denote the
rational primes below $\p$ and $\q$ by $p$ and $q$, respectively. 
Since for our cryptographic applications $p$ and $q$ should be large, we can assume that they are both odd. 

Note that we can compute $p$ if the ideal $\p$ is given. Indeed in our applications $\p$ will be given by
a set of generators, where the first is equal to $p$, but even if that is not the case, we can compute
the norm $\mathcal{N}(\p)$, which is a (small) power of $p$. The exponent is at most $d$.
From this, one can compute the prime $p$.
If $K$ is a Galois extension of $\Q$ (as in any of the examples discussed above), we actually know
that the exponent must be a divisor of $d$. Finally, if $p$ is inert in $K$, the exponent is actually
equal to $d$.

\subsection{Chinese Remainder Theorem revisited} For our generalisation of Rabin's cryptosystem we need an algorithm that computes the roots of any square in $\O_K/\mathfrak{n}$, under the assumption that we know the prime factors $\p$ and $\q$ of $\mathfrak{n}$. The first tool is an explicit version of the Chinese Remainder Theorem for $\O_K$ in which the involved isomorphism can be efficiently implemented on a computer.
More precisely, we need to compute the inverse of the canonical isomorphism
\begin{equation} \label{eqn:CRT}
	\O_K/\mathfrak{n} \simeq \O_K/ \p \times \O_K/ \q, \quad m+\mathfrak{n}\mapsto (m+\p,m+\q).
\end{equation}
In \cite{RabinGaussianIntegers} the authors use the extended Euclidean algorithm in $\Z[i]$ to construct an
inverse, but the ring of integers $\O_K$ is not Euclidean in general. The following algorithm solves this problem
by using only the fact that $\Z$ is a Euclidean domain.

\begin{algo}[Inverse of \eqref{eqn:CRT}]\label{crt}
Given $x + \p \in \O_K/\p$ and $y+\q \in \O_K/\q$ this algorithm finds an $m \in \O_K$ which is mapped
to the pair $(x+\p, y+\q)$ under \eqref{eqn:CRT}.
\begin{enumerate}
	\item Compute the rational primes $p$ and $q$ below $\p$ and $\q$, respectively.
	\item Use the extended Euclidean algorithm in $\Z$ to compute Bézout coefficients $a,b \in \Z$ of the equation
	$ap + bq = 1$.
	\item Output $m = xbq+yap \in \O_K$.
\end{enumerate}
\end{algo}

\subsection{Tonelli-Shanks for number rings} So we are left with the problem of computing square roots in $\O_K/\p$ and $\O_K/\q$, which can be done by a generalised version of the algorithm by Tonelli-Shanks. We let $f$ be the inertia degree of $\p$ (the computation of $f$ can be done efficiently). By definition, the residue field $\O_K/\p$ has $p^f$ elements, so $\mathcal{N}(\p) = p^f$, and its group of units $(\O_K/\p)^\times$ has $p^f-1$ elements.

\begin{prop}[Tonelli-Shanks for number rings] \label{prop:Tonelli-Shanks}
Let $c+\p\in \O_K/\p$ be a non-zero element which is a square. Choose a non-square $m+\p\in \O_K/\p$ and $s,t\in\N$ such that $2^s\cdot t=p^f-1$ and $t$ is odd. Let $d+\p\in\O_K/\p$ be the multiplicative inverse of $c+\p\in \O_K/\p$. Then there exists $i\in\{1,2,\dots,2^{s-1}\}$ such that
\[(m+\p)^{2ti}=(d+\p)^t\]
and
\[\left((c+\p)^{(t+1)/2}(m+\p)^{ti}\right)^2=c+\p.\]
\end{prop}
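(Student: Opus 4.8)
The plan is to mimic the classical Tonelli–Shanks argument, replacing $\F_p$ by the finite field $\O_K/\p \cong \F_{p^f}$, which has cyclic unit group of order $p^f-1 = 2^s t$ with $t$ odd. First I would work entirely inside the cyclic group $G := (\O_K/\p)^\times$ and set $n := m+\p$, noting that since $n$ is a non-square in a cyclic group of even order, $n$ generates the full $2$-Sylow part modulo squares; more precisely, the image of $n^t$ in the (cyclic) $2$-Sylow subgroup $G[2^\infty]$ of order $2^s$ is a generator, because $n^t$ has order exactly $2^s$ (its order divides $2^s$, and if it were $\le 2^{s-1}$ then $n^{2^{s-1}t} = 1$, forcing $n$ to be a square). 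So $\{ n^{2t}, n^{4t}, \dots, n^{2^s t}=1 \}$ is precisely the subgroup of squares in $G[2^\infty]$, i.e.\ the unique subgroup of order $2^{s-1}$.

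Next I would analyse the element $(d+\p)^t$, where $d+\p = (c+\p)^{-1}$. Since $c+\p$ is a non-zero square, so is its inverse $d+\p$, hence $(d+\p)^t$ is a square lying in $G[2^\infty]$ (its order divides $2^s$ because $(d+\p)^{2^s t} = (d+\p)^{p^f-1} = 1$). By the previous paragraph, $(d+\p)^t$ therefore lies in the subgroup $\{ n^{2ti} : 1 \le i \le 2^{s-1}\}$, which gives the existence of the required $i \in \{1, \dots, 2^{s-1}\}$ with
\[
(m+\p)^{2ti} = (d+\p)^t.
\]

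Finally I would verify the square-root identity by direct computation in $\O_K/\p$. Expanding,
\[
\left( (c+\p)^{(t+1)/2} (m+\p)^{ti} \right)^2 = (c+\p)^{t+1} (m+\p)^{2ti} = (c+\p)^{t+1} (d+\p)^t = (c+\p)^{t+1} (c+\p)^{-t} = c+\p,
\]
using $(m+\p)^{2ti} = (d+\p)^t = (c+\p)^{-t}$ and the fact that $(t+1)/2$ is an integer since $t$ is odd. This completes the argument.

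The main obstacle is the first paragraph: one must be careful that $\O_K/\p$ really is a finite field (guaranteed here since $\p$ is prime, so $\O_K/\p$ is a finite integral domain, hence a field) and that "non-square" behaves as in $\F_p$ — namely that its multiplicative group is cyclic, so that the index-$2$ subgroup of squares is well defined and a non-square hits the nontrivial class modulo squares in every $2$-power layer. Once cyclicity of $(\O_K/\p)^\times$ is invoked, the rest is the classical order-counting argument with $p$ replaced by $p^f$, and the verification at the end is routine.
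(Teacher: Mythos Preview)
Your proof is correct and follows essentially the same approach as the paper: both use cyclicity of $(\O_K/\p)^\times$ to argue that $(m+\p)^t$ has exact order $2^s$ while $(d+\p)^t$ has order dividing $2^{s-1}$, deduce the existence of $i$, and then verify the square-root identity directly. Your write-up simply fills in more detail than the paper's terse version.
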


\begin{proof}
Since $m+\p$ is not a square and $(\O_K/\p)^\times$ is a cyclic group of order $2^s t$, the order of $(m+\p)^t$ is exactly $2^s$. The order of $(d+\p)^t$ divides $2^{s-1}$. The existence of $i$ then follows from the fact that $(\O_K/\p)^\times$ is cyclic. 
The second formula is immediate.
\end{proof}

We note that all the numbers required in the proposition can be computed efficiently, so that Proposition \ref{prop:Tonelli-Shanks} provides an efficient way to compute a square root of $c+\p$:
\begin{enumerate}
\item A non-zero element $x+\p\in \O_K/\p$ is a square if and only if $(x+\p)^{(p^f-1)/2}=1+\p$. This condition can be efficiently checked using the Square and Multiply Algorithm and since half of the elements are non-squares, it is easy to find a non-square element by very few trials (with a high probability).
\item To compute $s$ and $t$ we simply have to divide $p^f-1$ by $2$ as many times as possible and then stop. To find $i$ efficiently by trials it is crucial that $2^{s}$ is not too large, which we expect to be the case for most of the choices of $p$.
\item To compute $d+\p$ we first consider the product $\tilde c$ of all the non-trivial Galois-conjugates of $c$ in a Galois extension of $\Q$ containing $K$. Then it is well-known that $c\tilde c\in\Z$ is prime to $p$ and that $\tilde c\in\O_K$. Now we can solve Bezout's equation $c\tilde c x+p y = 1$ for some $x,y\in\Z$. Then $d+\p=\tilde c x+\p$ is a multiplicative inverse of $c+\p$.
\end{enumerate}

\subsection{Fast square root extraction} \label{subsec:fast-square-roots}
In the case of the classical Rabin scheme, i.e.\ in the case $K = \Q$, there is a more efficient way than the Tonelli-Shanks algorithm to compute square roots modulo a prime $p$ provided that $p \equiv 3 \pmod 4$.
For general number fields, we provide the following generalisation.

\begin{prop} \label{prop:fast-square-root}
If $|\O_K/\p|\equiv 3\pmod{4}$ and $c+\p\in \O_K/\p$ is a non-zero element which is a square, then
\[(c+\p)^{(|\O_K/\p|+1)/4}\]
is a square root of $c+\p$.
\end{prop}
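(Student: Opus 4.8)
The plan is to mimic the classical argument for primes $p \equiv 3 \pmod 4$, replacing $p$ by the cardinality of the residue field. Write $Q := |\O_K/\p| = p^f$ and recall that the unit group $(\O_K/\p)^{\times}$ is cyclic of order $Q-1$. The hypothesis $Q \equiv 3 \pmod 4$ means $Q - 1 \equiv 2 \pmod 4$, so $(Q+1)/4$ is an integer and $(Q-1)/2$ is odd. These are the only two arithmetic facts needed.

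First I would observe that since $c + \p$ is a nonzero square in the field $\O_K/\p$, it lies in the unique subgroup of index $2$ of the cyclic group $(\O_K/\p)^{\times}$, equivalently $(c+\p)^{(Q-1)/2} = 1+\p$; this is just the residue-field version of Euler's criterion (and was already noted in the remarks following Proposition~\ref{prop:Tonelli-Shanks}). Then I would compute directly
\[
\left((c+\p)^{(Q+1)/4}\right)^{2} = (c+\p)^{(Q+1)/2} = (c+\p)^{(Q-1)/2}\cdot (c+\p) = (1+\p)\cdot(c+\p) = c+\p,
\]
which is exactly the claim. The step that requires the hypothesis is the divisibility $4 \mid Q+1$, guaranteeing that the exponent $(Q+1)/4$ makes sense as a non-negative integer so that the power is well defined in the ring $\O_K/\p$; without $Q \equiv 3 \pmod 4$ the exponent would not be integral.

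There is essentially no obstacle here: the only thing to be careful about is that the argument genuinely uses that $\O_K/\p$ is a \emph{field} (so that nonzero elements are units and the cyclicity of the unit group applies) — this is why we work modulo the prime ideal $\p$ rather than modulo $\mathfrak{n}$. One might also add the practical remark, paralleling the classical case, that this gives one square root and the other is its negative, and that in order to apply this in decryption one wants both $|\O_K/\p|$ and $|\O_K/\q|$ congruent to $3$ modulo $4$; but for the proposition as stated the displayed computation is the entire proof.
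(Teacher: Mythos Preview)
Your proof is correct and essentially identical to the paper's: both observe that $(c+\p)^{(|\O_K/\p|-1)/2}=1+\p$ (equivalently, that the order of $c+\p$ divides $(|\O_K/\p|-1)/2$) and then perform the same one-line computation $\bigl((c+\p)^{(Q+1)/4}\bigr)^2=(c+\p)^{(Q-1)/2}(c+\p)=c+\p$. Your additional remarks about integrality of the exponent and the role of $\p$ being prime are fine elaborations but not substantively different.
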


\begin{proof}
Since $c+\p$ is a square, its order divides $(|\O_K/\p|-1)/2$. Therefore we have that
\[\left((c+\p)^{(|\O_K/\p|+1)/4}\right)^2=(c+\p)^{(|\O_K/\p|+1)/2}=(c+\p)^{(|\O_K/\p|-1)/2}(c+\p)=c+\p,\]
as desired.
\end{proof}

By the discussion in the previous section, we are mainly interested in primes which are inert and by the above proposition we have a gain in efficiency in the computations if we restrict to the case $|\O_K/\p|\equiv 3\pmod{4}$. Unfortunately, these two conditions are incompatible in quadratic number fields, since then for an inert odd prime $p$, $|\O_K/\p|=p^2\equiv 1\pmod{4}$. This motivates us to examine cubic number fields as $p^3 \equiv p \pmod 4$. Since we also want an efficient way of checking if a prime is inert, it turns out that taking fields of the form $\Q(\sqrt[3]{d})$ 
might not be the optimal choice.

\begin{example}
Some very interesting examples of cubic extensions with good characterisations for families of inert primes are subfields of cyclotomic extensions.

The first example we consider is the maximal real subfield of $\Q(\zeta_7)$, i.e. the field $\Q(\zeta_7+\bar\zeta_7)$. 
By \cite[Proposition 2.16]{Washington} the ring of integers of that field is $\Z[\zeta_7+\bar\zeta_7]$. We can easily compute the minimal polynomial of $\zeta_7+\bar\zeta_7$, namely $x^3+x^2-2x-1$. The primes which are inert in $\Q(\zeta_7+\bar\zeta_7)$ are those with inertia degree divisible by $3$ in $\Q(\zeta_7)$. We have already seen that 
these are the primes whose classes modulo $7$ have an order which is a multiple of $3$, namely those congruent to $2$, $3$, $4$ or $5$ modulo $7$. In order to use the simplified computation of square roots above, we restrict to primes congruent to $3$ modulo $4$. Summarizing, we want to take prime numbers congruent to $3$, $11$, $19$ or $23$ modulo $28$.

Similarly, we can consider the maximal real subfield of $\Q(\zeta_9)$, i.e. $\Q(\zeta_9+\bar\zeta_9)$. The minimal polynomial of $\zeta_9+\bar\zeta_9$ is $x^3-3x+1$ and a family of inert primes consists of those which are congruent to $2$, $4$, $5$ or $7$ modulo $9$. This leads to considering the primes congruent to $7$, $11$, $23$ or $31$ modulo $36$.

This can be done for every cyclotomic field whose degree is a multiple of $3$. From $\Q(\zeta_{13})$ we obtain the extension given by a root of $x^3+x^2-4x+1$; from $\Q(\zeta_{19})$ we get $x^3+x^2-6x-7$. Note that in this last example, the largest power of $3$ dividing the degree of the cyclotomic field over $\Q$ is $9$, so in order to make sure that a prime is inert in the subfield of degree $3$ we need to require that the order of $p$ modulo $19$ is a multiple of $9$, not just of $3$, which leads to the primes congruent to $2$, $3$, $4$, $5$, $6$, $9$, $10$, $13$, $14$, $15$, $16$, $17$ modulo $19$. Together with the usual requirement ${p\equiv 3\pmod{4}}$, we get the acceptable congruences to $3$, $15$, $23$, $35$, $43$, $47$, $51$, $55$, $59$, $63$, $67$, $71$ modulo $76$.
\end{example}

\subsection{Square roots and factoring} We end this section with an examination of the converse problem, which will be relevant in the discussion about the security of our cryptosystem.

\begin{prop}
	Let $K$ be a number field and let $\mathfrak n = \mathfrak{p}\mathfrak{q}$ be the product of two distinct
	prime ideals in $\mathcal{O}_K$. Suppose $m_1+\mathfrak{n}$ and $m_2+\mathfrak{n}$ 
	are both square roots of the same element
	$c+\mathfrak{n}\in\O_K/\mathfrak{n}$ and that $m_1+\mathfrak{n}\neq \pm m_2+\mathfrak{n}$. 
	Then we can efficiently compute a factorisation of $\mathfrak{n}$.
\end{prop}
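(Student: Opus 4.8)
The plan is to mimic the classical Rabin argument, where the key identity is that $m_1^2 \equiv m_2^2 \pmod{\mathfrak n}$ means $\mathfrak n \mid (m_1-m_2)(m_1+m_2)$, but $\mathfrak n$ divides neither factor. First I would note that $c+\mathfrak n$ being a square with square roots $m_1,m_2$ gives $m_1^2 - m_2^2 = (m_1-m_2)(m_1+m_2) \in \mathfrak n = \mathfrak p \mathfrak q$. Since $\mathfrak p$ and $\mathfrak q$ are prime ideals, each of them contains at least one of the two factors $m_1 - m_2$ and $m_1 + m_2$. The hypothesis $m_1 + \mathfrak n \neq \pm m_2 + \mathfrak n$ says that neither $m_1 - m_2$ nor $m_1 + m_2$ lies in $\mathfrak n$; hence $\mathfrak p$ and $\mathfrak q$ cannot both divide the same factor. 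After possibly swapping the names of $\mathfrak p$ and $\mathfrak q$, we may assume $\mathfrak p \mid (m_1 - m_2)$ but $\mathfrak p \nmid (m_1 + m_2)$, and $\mathfrak q \mid (m_1 + m_2)$ but $\mathfrak q \nmid (m_1 - m_2)$.

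Next I would extract an explicit factor. Set $\mathfrak d := (m_1 - m_2, \mathfrak n)$, the ideal of $\mathcal O_K$ generated by $m_1 - m_2$ together with (generators of) $\mathfrak n$. By the divisibility just established, $\mathfrak p \mid \mathfrak d$ while $\mathfrak q \nmid \mathfrak d$, so $\mathfrak d = \mathfrak p$. Thus computing the gcd of the ideal $(m_1 - m_2)$ with $\mathfrak n$ recovers $\mathfrak p$ exactly, and then $\mathfrak q = \mathfrak n \mathfrak d^{-1}$, or simply $\mathfrak q = (m_1 + m_2, \mathfrak n)$ by the symmetric argument. This is the factorisation of $\mathfrak n$ that is sought. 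I would remark that the gcd of two ideals given by generators is computable in polynomial time (e.g.\ by reducing to Hermite normal form computations over $\Z$, as in \cite{Cohen}), so the whole procedure is efficient; the input $m_1, m_2$ are given as elements of $\mathcal O_K/\mathfrak n$, and we just lift them to $\mathcal O_K$ arbitrarily, which does not affect the ideals $(m_1 - m_2, \mathfrak n)$ and $(m_1 + m_2, \mathfrak n)$.

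I do not expect a serious obstacle here: the argument is the ideal-theoretic translation of the elementary number-theoretic one, and every step is a one-line consequence of unique factorisation of ideals in the Dedekind domain $\mathcal O_K$ together with the fact that $\mathfrak p, \mathfrak q$ are prime (hence maximal, hence the prime avoidance/primality step is immediate). The only point that needs a word of care is the case analysis: a priori it could happen that $\mathfrak p$ and $\mathfrak q$ both divide $m_1 - m_2$, or both divide $m_1 + m_2$; one must observe that either scenario forces $\mathfrak n \mid (m_1 - m_2)$ or $\mathfrak n \mid (m_1 + m_2)$, contradicting $m_1 + \mathfrak n \neq \pm m_2 + \mathfrak n$. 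Once this is ruled out, the computation of $\gcd((m_1 - m_2), \mathfrak n)$ yields a nontrivial proper divisor of $\mathfrak n$, completing the proof.
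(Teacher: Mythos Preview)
Your proof is correct. The case analysis is sound: from $\mathfrak n\mid(m_1-m_2)(m_1+m_2)$ together with $\mathfrak n\nmid(m_1\pm m_2)$ one indeed gets (after relabelling) $\mathfrak p\mid(m_1-m_2)$, $\mathfrak q\nmid(m_1-m_2)$, and then $(m_1-m_2,\mathfrak n)=\mathfrak p$ because the only divisors of $\mathfrak n=\mathfrak p\mathfrak q$ are $\mathcal O_K,\mathfrak p,\mathfrak q,\mathfrak n$. Your remark that both $\mathfrak p$ and $\mathfrak q$ dividing the same factor would force $\mathfrak n$ to divide it is exactly what is needed, and ideal gcds via Hermite normal form are polynomial time as you say.

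The paper takes a different route: rather than computing an ideal gcd in $\mathcal O_K$, it passes to $\Z$ via norms. One observes that $m_1+m_2$ lies in exactly one of $\mathfrak p,\mathfrak q$, so $\gcd\bigl(N_{K/\Q}(m_1+m_2),\,\mathcal N(\mathfrak n)\bigr)$ is a power of one of the underlying rational primes $p$ or $q$; from this one recovers $p$ and $q$, then decomposes $(p)$ and $(q)$ in $\mathcal O_K$ and uses valuations to pick out $\mathfrak p$ and $\mathfrak q$. This keeps the gcd step entirely in $\Z$, consistent with the paper's general preference for working in $\Z[\theta]$ and $\Z$ rather than in $\mathcal O_K$. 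Your approach is more direct and yields $\mathfrak p$ in one stroke without the detour through rational primes and the subsequent valuation check; the cost is that it relies on ideal arithmetic in $\mathcal O_K$ (or in $\Z[\theta]$, which suffices here since $p,q$ are prime to the index). Either method is efficient, and yours has the virtue of being insensitive to how $p$ and $q$ decompose in $K$.
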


\begin{proof}
	By the Chinese Remainder Theorem \eqref{eqn:CRT}, $m_1$ and $m_2$ are congruent to each other modulo one of the prime divisors of $\mathfrak{n}$ and to the opposite of each other modulo the other one. Hence $m_1+m_2$ belongs to exactly one of the prime divisors of $\mathfrak{n}$, say $\p$. Then the norm $n$ of $m_1+m_2$ is divisible by $p$ but not by $q$. Hence the greatest common divisor of $n$ and the norm $\mathcal{N}(\mathfrak{n})$ is a power of $p$. The exponent is bounded by the degree $d$
	so that we can find $p$. Since $\mathcal{N}(\mathfrak{n})$ is a product of a small power of $p$ and a small power of $q$,
	we also obtain $q$. Once we have $p$ and $q$, we can compute their decompositions into prime ideals in 
	$\O_K$ as described in \S \ref{subsec:prime-ideals}.
	Therefore the two prime factors of $\mathfrak{n}$ must belong to two families of prime ideals $\p_{1},\dots,\p_{n_p}$ and $\q_{1},\dots,\q_{n_q}$ with $n_p,n_q\leq d$. To decide which of the factors actually divide $\mathfrak{n}$, we can use the algorithm described in \cite[\S 4.8.3]{Cohen} to compute the valuation of $\mathfrak{n}$ with respect to the different prime ideals.
\end{proof}

\section{Choosing the right square root}
An important issue of the Rabin-cryptosystem, even in its classical version over $\Z$, is the missing unicity in the decryption process: While encrypting corresponds to squaring, decrypting means to compute a square root modulo $N=pq$, and in general there are four of them. This question has been addressed in \cite{EliaPivaSchipani}, where the authors describe several ways of transmitting the required information in two extra bits. The easiest method consists in taking as extra bits the parity of the message $m$ and its Legendre symbol over $N$. This system, which was originally proposed in \cite{Williams}, is shown to work when both $p$ and $q$ are congruent to $3$ modulo $4$, which, as seen before, is a good choice also in order to simplify the computation of the square roots. Of course, using some paddings with random bits, one has to make sure that the two extra bits do not leak any information about the message, which would compromise semantic security.

We will now see how to handle the issue of identifying the right square root in the generalised setting treated in this paper. The requirement of taking inert primes $p$ and $q$ turns out to be very helpful also at this stage, so we will keep assuming it. Assume that the message we encrypt is of the form $m=\sum_{i=0}^{d-1}a_i\theta^i$, where $a_i \in \Z/N\Z$
for $0 \leq i < d$. Then the decryption process leads to four possible square roots $m_1,m_2,m_3,m_4$ of $c= m^2$. Write $m_j=\sum_{i=0}^{d-1} a_{i,j}\theta^i$ with $a_{i,j} \in \Z/ N\Z$. Then each coefficient $a_{i,j}$ must be congruent to $\pm a_i$ modulo $p$ and modulo $q$, possibly with opposite signs. In any case $a_{i,1},a_{i,2},a_{i,3},a_{i,4}$ are exactly the four square roots of $a_i^2$ modulo $N$. Therefore to determine which of the possible messages $m_1,m_2,m_3,m_4$ is the original message $m$ is equivalent to determining which of the coefficients $a_{i,1},a_{i,2},a_{i,3},a_{i,4}$ coincides with $a_i$. This means that we can literally use the same methods discussed in \cite{EliaPivaSchipani} and apply them to a single component of the message, for example the one of $\theta^0$.

There is still one detail which needs to be addressed: some of the $a_{0,1},a_{0,2},a_{0,3},a_{0,4}$ might coincide, which happens if and only if any one of the numbers $a_{0,j}$ is congruent to $0$ modulo $p$ or modulo $q$. If it is $0$ modulo both $p$ and $q$, then it is $0$ modulo $N$ and so are all of the $a_{0,1},a_{0,2},a_{0,3},a_{0,4}$, as well as $a_0$. If this is the case, then we should just look at the next coefficient $a_1$ and store the corresponding bits for this component. The second case is when $a_{0,j}$ is $0$ modulo $p$ but not modulo $q$, or vice versa. Having such a number is equivalent to knowing the factorisation of $N$, by just computing $\gcd(N,a_{0,j})$ with the Euclidean algorithm. This means that under our assumption that $N$ can not be factored, we conclude that the described situation will never happen in practice.

\section{The Rabin cryptosystem in number fields}
We can now introduce the specifications of our generalised Rabin cryptosystem to number fields. 

\subsection{Key generation} First of all we need to choose a number field. This choice could be part of the public key, but there is actually no reason for using a different field every time; indeed, fixing a field $K$ once and for all permits to increase the efficiency of key generation by doing some preliminary computations.

We will distinguish and compare three different implementations for the following families of number fields: quadratic fields, (subfields of) cyclotomic fields, general number fields. Since for the key generation algorithm, we want to choose prime numbers that are inert in $K$, it is useful to find some congruence conditions which are easy to verify. Let us see how to do this in the three mentioned cases.\\

Case 1. A quadratic field $K=\Q(\sqrt{\delta})$, where $\delta$ is a square-free integer, which is not too big. A 
sufficiently large prime $p$ will not ramify in $K$ and it is inert if and only if $x^2-\delta$ is irreducible modulo $p$, i.e. if and only if $\left(\frac{\delta}{p}\right)=-1$. Using quadratic reciprocity and its supplements, we can find a list of congruences modulo $\delta$, $4\delta$ or $8\delta$ which $p$ must satisfy.

Case 2. A cyclotomic field $K=\Q(\zeta_m)$, where $m = \ell^k$ is a perfect power of an odd prime number $\ell$. Then a prime $p$ is inert if and only if its class modulo $m$ is a primitive element modulo $m$. By computing one primitive element modulo $m$ and taking all of its powers with exponents coprime to $\varphi(m) = (\ell-1) \ell^{k-1}$, we find all the possible primitive elements and, as in case 1, we obtain a list of possible congruences modulo $m$ which $p$ has to satisfy.

If $\ell-1 = 3^a b$ with $a>0$ and $3 \nmid b$, then there is a unique subfield $K_3$ of $K$ of degree $3$ over $\Q$. 
A prime $p$ is inert in $K_3$ if and only if its inertia degree in $K$ is divisible by $3^a$. Again this leads
to a list of possible congruences modulo $m$.

Case 3. A generic number field $K$ is defined by the minimal polynomial $g$ of a primitive element $\theta$ of $K$. Then the prime numbers $p$ which are inert are characterised as those primes modulo which $g$ remains irreducible. This condition needs to be checked prime by prime in the key generation algorithm, making the computation less efficient. It might indeed happen that for a given number field we do not find any suitable primes.\\

Let us describe the key generation algorithm for a fixed field $K$. We need to choose two prime numbers $p$ and $q$ satisfying the conditions described above and such that their product is not factorisable efficiently with any known algorithm. In particular we need to make sure that $p$ and $q$ are sufficiently large and are such that $p-1$ and $q-1$ have some large primes among their factors. Here is a good way of achieving this if we have some congruence conditions modulo some integer $D$ like in case 1 and case 2: choose a random value $c$ among the allowed congruences, choose random numbers $k$ larger than $2^{\lambda}$, where $\lambda$ is a fixed security parameter, until $\ell=2kD+1$ is prime. Finally take $h=0$ and increase it by $1$ until $p=(hD+c-1)\ell+1$ is prime. By construction $p$ is a large prime congruent to $c$ modulo $D$ and such that $p-1$ is divisible by the large prime $\ell$. The same construction is used to choose $q$. The couple $(p,q)$ is the private key, the product $N=pq$ is the public key. In the generic case 3 we can just choose random primes $p$ and $q$ and need to check prime by prime whether the polynomial $g$ factorises modulo $p$ and $q$; this should be quite time consuming.

In the cases where we cannot apply the fast square root extraction of \S \ref{subsec:fast-square-roots}, 
the two non-squares which are required for the Tonelli-Shanks algorithm can be determined once
and then stored for further use as part of the private key. This speeds up decryption slightly.

\subsection{Rabin scheme over number rings}
Suppose we have given a monic irreducible polynomial $g \in \Z[x]$ of degree $d$ and a public key $N = pq$, where $p$ and $q$
are two distinct primes, which are inert in $K = \Q[x] / (g)$ and congruent to $3$ modulo $4$. 
We let $\theta := x \pmod{g}$ so that $\Z[\theta] = \Z[x]/(g)$.
If $0 \not= m = \sum_{i=0}^{d-1} a_i \theta^i \in \Z[\theta] / (N)$ with $a_i \in \Z/ N\Z$, $0 \leq i <d$, then we write
$\alpha(m)$ for the first non-zero coefficient of $m$, i.e.\
$\alpha(m) := a_{i_0}$, where $a_{i_0} \not=0$ and $a_i = 0$ for all $i<i_0$. 
In our implementation we will assume for simplicity that $a_0\neq 0$.

\subsection*{Encryption} Suppose Alice wants to encrypt a (non-zero) message 
$m = \sum_{i=0}^{d-1} a_i \theta^i$ with
$a_i \in \Z/ N\Z$, $0 \leq i <d$. Then she computes $c \equiv m^2 \pmod{N}$, the parity $b_0  = \alpha(m) \pmod 2$ 
of $\alpha(m)$ and $b_1 = \frac{1}{2}\left(1- \left(\frac{\alpha(m)}{N}\right)\right) \in \left\{0,1\right\}$ and transmits
the triple $(c, b_0,b_1)$ to Bob.
\subsection*{Decryption} Bob computes the square roots of $c \pmod{p}$ and $c\pmod{q}$ either with the Tonelli-Shanks
algorithm (Proposition \ref{prop:Tonelli-Shanks}) or - whenever possible -  via Proposition \ref{prop:fast-square-root}.
He computes the Legendre symbols modulo $p$ and modulo $q$ of the first non-zero coefficients of these square roots.
Via Algorithm \ref{crt} he combines two of the latter to a square root $m_1$ of $c\pmod N$ having the correct value of $b_1$. Note that $-m_1$ then leads to the same value of $b_1$ and is also a square root of $c \pmod N$. The original message is the one among $m_1$ and $-m_1$ having the correct parity bit $b_0$.\\

We provide an implementation of the described algorithms for the settings described above in PARI on GitHub. We also included an implementation of the classical Rabin cryptosystem in order to guarantee a fair comparison of running times.\footnote{https://github.com/alecobbe/RabinNF}

\section{Security and efficiency}

In this section we examine some aspects of the implementation in more detail and compare running times
of our algorithms in different cases.

As for the key generation algorithm we have three different scenarios: in case 1 we can use a list of congruences for $p$ and $q$ and we do the pre-calculations for the Tonelli-Shanks algorithm, in case 2 we still have a list of congruences but we do not need any pre-calculations, in case 3 we do not have a list of congruences and we do pre-calculations. Concerning the running time, the pre-calculations required for the Tonelli-Shanks algorithm are quite negligible and the number of candidates on which to perform a primality test is on average the same as for the classical Rabin cryptosystem or for RSA. The general case 3 is much slower because we have to discard quite a lot of primes because they are not inert.

The encryption function is literally the same in all cases, since it just has to compute a square in the relevant ring of integers modulo the public key. Note, however, that the running time increases with the degree of the number field, since we need to compute a square in a larger ring, whose elements are represented by polynomials. We performed experiments over the Gaussian integers and the cubic number field defined by the polynomial $x^3+x^2-2x-1$. The running time was in average about $1.1$ and $1.25$ times as long as over $\Z$, respectively. This is actually very favourable, since there are twice or even three times as many bits available in the plaintext space.

Concerning decryption there are only two different scenarios: one using the Tonelli-Shanks algorithm and one using the simplified version of Proposition \ref{prop:fast-square-root}, which in particular works in our degree $3$ example. The running time of decryption over Gaussian integers turned out to be about $23$ times slower than in the classical Rabin cryptosystem. Almost the whole time is required for the Tonelli-Shanks algorithm, which was our motivation to consider fields of degree $3$. In our example, the factor we measured was slightly below $30$. Taking the ratio with the number of bits of the message this is, as expected, better than over Gaussian integers, but it is still less efficient than the classical Rabin cryptosystem.

Let us explain what happens from a more theoretical point of view. Computing the $e$-th power of a number modulo $p$ with the square and multiply algorithm means computing $c\log(e)$ multiplications within $\Z/p\Z$ for some constant $c$. If $\mathcal O_K$ is the ring of integers of a number field of degree $3$ over $\Q$, then the computation of an $e$-th power in $\mathcal O_K/(p)$ corresponds to $9c\log(e)$ multiplications in $\Z/p\Z$. However, the powers we need to compute
over $\Z$ and $\mathcal{O}_K$ have exponents $\frac{p+1}{4}$ and $\frac{p^3+1}{4}$, respectively. Having a third power in the second exponent implies another factor of $3$ in the required number of multiplications in $\Z/p\Z$. To conclude, this analysis would explain a factor of $27$; the actually observed factor is just slightly larger.

To conclude, key generation takes about the same time for the classical Rabin cryptosystem and its generalisations, provided we have a list of congruences that guarantee that a prime $p$ is inert in the considered field. The encryption is extremely fast and becomes even more efficient in number fields, while decryption becomes less efficient in general number fields, even though some extensions of degree $3$ of the rational numbers perform better than Gaussian integers.

As for security, we should say that our implementation was mainly conceived to examine running times, but it uses the random function of PARI/GP, which is not cryptographically secure. Besides that issue, we have already discussed that being able to decrypt any ciphered message is equivalent to being able to factorise the public key into its prime factors, exactly as for the classical Rabin cryptosystem. In particular, it cannot be used for post quantum cryptography since
an attacker having access to a quantum computer can factor the modulus via Shor's algorithm \cite{Shor}.
To achieve semantic security, one should also introduce some random paddings, just as for every public key scheme. We additionally point out that we are transmitting two extra-bits carrying information about the degree zero term of the polynomial representing our message; so that is where the padding should happen.

\section{Conclusion}
In this article we proposed a novel public-key cryptosystem which extends the classical Rabin scheme to
general number fields. We have seen that encryption performs very fast, but decryption is less efficient than for
the classical Rabin scheme. However, a more sophisticated variant for degree $3$ number fields at least 
performs relatively better than the case of quadratic number fields. Because public key cryptosystems are generally used only to exchange keys for a more efficient symmetric encryption scheme, their slower performance is typically not a significant disadvantage.

Our new scheme is provably equivalent to the factorisation problem of a large integer $N$ if the prime ideals
$\mathfrak{p}$ and $\mathfrak{q}$ dividing the modulus $\mathfrak{n}$ are chosen carefully. In particular, prime ideals
generated by rational inert primes have this property and even further desirable side effects.
These observations also hold for the RSA scheme over rings of integers in number fields.

\bibliography{bibliography}
\addcontentsline{toc}{section}{Bibliography}
\bibliographystyle{abbrv}

\end{document}